\documentclass[letterpaper, 10 pt, conference]{ieeeconf}  
\IEEEoverridecommandlockouts   
\usepackage[USenglish]{babel}
\usepackage{xcolor,graphics,graphicx} 
\usepackage{epsfig} 
\usepackage{amsmath} 
\usepackage{amssymb}  
\usepackage{amsfonts}
\usepackage{mathtools}  
\usepackage{optidef,booktabs}
\usepackage{cite}
\usepackage{balance}
\usepackage{url}
\usepackage{lipsum} 
\usepackage{hyperref}
\usepackage[utf8]{inputenc} 
\usepackage[font=small,skip=0pt]{subcaption}
\usepackage[font=small,skip=0pt,compatibility=false]{caption}
\usepackage{tabularray}
\usepackage{tikz}
\usetikzlibrary{decorations.markings}

\UseTblrLibrary{amsmath}

\newtheorem{remark}{Remark}

\newtheorem{theorem}{Theorem}
\newtheorem{corollary}{Corollary}
\newtheorem{definition}{Definition}

\title{\LARGE \bf Computing Scaled Relative Graphs of Discrete-Time LTI Systems: A Frequency-Domain Approach}

\author{Guoming Shi$^{1}$ and  Fredy Ruiz$^{1}$%
\thanks{This work was partially supported by the China Scholarship Council.}
\thanks{$^{1}$The authors are with the Dipartimento di Elettronica, Informazione e Bioingegneria, Politecnico di Milano, 20133 Milan, Italy {\tt\small guoming.shi, fredy.ruiz@polimi.it}}%
}

\begin{document}

\maketitle
\thispagestyle{empty}
\pagestyle{empty}

\begin{abstract}
The scaled relative graph (SRG) represents the joint gain and phase properties of an input--output operator in the complex plane. 
This paper characterizes the SRG closure of causal, stable, square discrete-time linear time-invariant (LTI) systems on one-sided square-summable sequences. 
The Beltrami--Klein image of this closure equals the convex hull of the numerical ranges of the transformed frequency-response matrices. 
For real-coefficient single-input single-output systems, this yields the hyperbolic convex hull of the discrete-time Nyquist locus. 
The proof addresses the Hardy-space restriction imposed by one-sided inputs, which excludes persistent sinusoids, by replacing them with explicitly constructed finite-duration sinusoids. 
This construction shows how points in frequency-wise SRGs arise as limits of operator SRG points generated by admissible inputs. 
The characterization enables model-based computation of the closure from a state-space realization using frequency-response evaluations, numerical ranges, and planar convex hulls, without solving linear matrix inequalities.
\end{abstract}


\section{Introduction}

The scaled relative graph (SRG) was introduced by Ryu, Hannah, and Yin~\cite{ryu2022scaled} as a two-dimensional (2D) geometric tool for analyzing operators and the convergence of optimization algorithms; related results appear in~\cite{huang2020tight}. 
Chaffey et al.~\cite{chaffey2021,chaffey2023} applied SRGs to control systems analysis. Their approach unifies gain and phase in the complex plane and connects frequency-domain methods with nonlinear feedback analysis. 
Subsequent work linked SRGs to numerical ranges~\cite{pates2021scaled}, dissipativity~\cite{de2025dissipativity}, and integral quadratic constraints~\cite{baron2026Equivalence}, and extended them to nonlinear multiple-input multiple-output (MIMO) systems and to soft and hard SRGs~\cite{krebbekx2025nonlinear,krebbekx2025hard,chen2026soft}.

Pates~\cite{pates2021scaled} related linear-operator SRGs to numerical ranges through the Beltrami--Klein map. 
For stable continuous-time single-input single-output (SISO) linear time-invariant (LTI) systems, this connects the SRG to the Nyquist diagram via the Beltrami--Klein model of hyperbolic geometry~\cite{chaffey2023,pates2021scaled}. 
Baron-Prada et al.~\cite{baron2025stability} computed the SRG of the frequency-response matrix at each frequency. Nauta and Pates~\cite{nauta2026closedoperators} characterized SRGs of closed operators through the gains of real shifts of the operator. 
Their discrete-time formulation~\cite{nauta2026Discrete-time} provides data-based and robust computational methods using linear matrix inequalities (LMIs). In the stable case considered here, its input domain is the full space of one-sided square-summable sequences; hence, it characterizes the same SRG closure as the present work. 
We derive an explicit description of this set from a given model's frequency response, without repeatedly solving LMIs for gain bounds.

One-sided square-summable discrete-time input sequences map into the Hardy space $\mathcal{H}_2$, a proper closed subspace of $L_2$ on the unit circle; hence, arbitrary $L_2$ boundary functions cannot be used as inputs. 
Persistent sinusoids are not square summable and are therefore inadmissible. 
Points in the SRG of each frequency-response matrix $G(e^{j\theta})$ are approached using increasingly long truncated sinusoids and thus belong to the operator SRG closure. 
However, they need not belong to the operator SRG itself. 
We address this restriction by constructing finitely supported input sequences that realize the required limiting quadratic-form weights.

Our main result, Theorem~\ref{thm:dt_srg_closure}, establishes that the Beltrami--Klein image of the operator SRG closure equals the convex hull of the numerical ranges of the transformed frequency-response matrices. 
For SISO systems, this yields the hyperbolic convex hull of the classical Nyquist locus, so the continuous-time Nyquist interpretation carries over to discrete time. 
This enables model-based computation of the closure directly from a state-space realization through frequency-response evaluations, numerical ranges, and planar convex hulls. 
Two filter models from~\cite{nauta2026Discrete-time} and two MIMO plants illustrate the underlying frequency-wise SRGs.

Section~\ref{Preliminaries} introduces the notation and SRG definitions. 
Section~\ref{SRG Closure} presents the SRG closure characterization and its SISO Nyquist interpretation. 
Sections~\ref{Examples} and~\ref{Conclusions} provide numerical examples and conclusions, respectively.

\section{Preliminaries}\label{Preliminaries}
\subsection{Notation}
Let $\mathbb{F}$ denote either the real field $\mathbb{R}$ or the complex field $\mathbb{C}$, and let $\mathbb{N}_0:=\{0,1,2,\ldots\}$.
Let $j$ denote the imaginary unit, with $j^2=-1$.
For $z\in\mathbb{C}$, its complex conjugate, real part, imaginary part, and modulus are denoted by $\bar z$, $\Re(z)$, $\Im(z)$, and $|z|$, respectively.
For a complex vector or matrix $A$, $A^\top$ and $A^*$ denote its transpose and conjugate transpose, respectively.
The identity matrix of dimension $n$ is denoted by $I_n$, or simply by $I$ when the dimension is clear.
For a square matrix $A$, its spectral radius is denoted by $\rho(A)$.
For a Hermitian matrix $P$, $P\succ 0$ means that $P$ is positive definite, in which case $P^{1/2}$ denotes its Hermitian positive definite square root.
The closed upper half-plane is denoted by $\overline{\mathbb{C}}_{+}:=\{z\in\mathbb{C}:\Im(z)\geq 0\}$.
The closure of a set $\mathcal{S}$ is denoted by $\mathrm{cl}\,\mathcal{S}$, and its convex hull $\mathrm{co}\,\mathcal{S}$ is the smallest convex set containing $\mathcal{S}$.

\subsection{Signal Spaces}
Let $\mathcal{H}$ denote a Hilbert space over $\mathbb{F}$, equipped with an inner product $\langle\cdot,\cdot\rangle:\mathcal{H} \times \mathcal{H} \rightarrow \mathbb{F}$ and the norm $\|x\|:= \sqrt{\langle x, x\rangle }$. The angle between two nonzero elements $x,y \in \mathcal{H}$ is defined as
\begin{align*}
    \angle(x,y):=\arccos{\left( \frac{\Re \langle x, y\rangle}{\|x\|\,\|y\|}\right)}\in [0, \pi].
\end{align*}

For the discrete-time analysis, we consider the Hilbert space of one-sided square-summable sequences
\begin{align*}
    \ell_2^+(\mathbb{N}_0) := \Big\{ u=(u_k)_{k\in\mathbb{N}_0}:\ u_k\in\mathbb{C}^m,\ \sum_{k=0}^{\infty}u_k^*u_k<\infty\Big\},
\end{align*}
with inner product $\langle u,y\rangle:=\sum_{k=0}^{\infty}u_k^*y_k$.
For $u\in\ell_2^+(\mathbb{N}_0)$, its discrete-time Hardy-space boundary value is denoted by
\begin{align*}
    U(e^{j\theta})= \sum_{k=0}^{\infty}u_k e^{-jk\theta},\qquad 0\leq\theta<2\pi,
\end{align*}
where the series is understood in the $L_2$ boundary-value sense and is defined almost everywhere on the unit circle.
By Parseval's identity, $\langle u,y\rangle=\frac{1}{2\pi}\int_0^{2\pi}U(e^{j\theta})^*Y(e^{j\theta})\,d\theta$. Hence, $u\mapsto U$ is an isometry, with respect to the normalized measure $d\theta/(2\pi)$, from $\ell_2^+(\mathbb{N}_0)$ onto the Hardy space $\mathcal{H}_2$, which is a proper closed subspace of $L_2$ on the unit circle. In particular, not every square-integrable function on the unit circle is the boundary value of an admissible input.

\subsection{Scaled Relative Graphs}
For a linear operator $T$ on $\mathcal{H}$, the SRG is defined as the set of complex numbers~\cite{ryu2022scaled,chaffey2023}
\begin{align*}
    \operatorname{SRG}(T)= \left\{\dfrac{\| {Tv}\|}{\| {v}\|} \exp (\pm j \angle \langle v,Tv \rangle ):v\in\mathcal{H}, v \neq 0 \right\}.
\end{align*}
If $Tv = 0$, the angle is taken to be zero. The set is symmetric with respect to the real axis. For $v\neq0$, we call $z_v:=\frac{\|Tv\|}{\|v\|}e^{j\angle \langle v,Tv \rangle }\in\overline{\mathbb{C}}_+$ the upper-half-plane SRG point generated by $v$. For a matrix $M\in\mathbb{C}^{m\times m}$, $\operatorname{SRG}(M)$ denotes the SRG of the linear map $x\mapsto Mx$ on $\mathbb{C}^m$.

\subsection{The Beltrami--Klein Map and Numerical Range}
To characterize SRGs, we use the Beltrami--Klein map introduced in~\cite{pates2021scaled},
\begin{align*}
    f_{bk}(z)=\frac{(\bar z-j)(z-j)}{1+|z|^2}=\frac{|z|^2-1-2j\,\Re(z)}{1+|z|^2},
\end{align*}
which maps each conjugate pair $\{z,\bar z\}$ to the same point in the closed unit disk and maps real numbers to the unit circle. Its inverse on conjugate-symmetric sets is written as
\begin{align*}
    g_{bk}(w)=\left\{\frac{\Im(w) \pm j\sqrt{1-|w|^2}}{\Re(w) - 1}\right\},\qquad |w|\leq 1,\ w\neq 1,
\end{align*}
and $g_+(w)$ denotes the element of $g_{bk}(w)$ that lies in $\overline{\mathbb{C}}_+$. The map $f_{bk}$ sends the hyperbolic geodesics of the upper half-plane, i.e., vertical half-lines and semicircles centered on the real axis, to straight line segments in the unit disk. Accordingly, following~\cite{chaffey2023,pates2021scaled}, we express the hyperbolic convex hull of a bounded set $\mathcal{S}\subseteq\overline{\mathbb{C}}_+$ as
\begin{align*}
    \mathrm{hconv}(\mathcal{S}):=g_+\big(\mathrm{co}\,f_{bk}(\mathcal{S})\big).
\end{align*}
For a square matrix $M$, define its matrix Beltrami--Klein transform by
\begin{align*}
    f_{bk}(M)&=(I+M^*M)^{-\frac{1}{2}}(M^*-jI)(M-jI)\\
    &\quad\times(I+M^*M)^{-\frac{1}{2}},
\end{align*}
which reduces to the scalar map above when $M$ is a scalar. 

Finally, the numerical range of $M\in\mathbb{C}^{m\times m}$ is defined as the set 
\begin{align*}
    \mathcal{W}(M)=\left\{x^*Mx :x\in \mathbb{C}^m,\|x\|=1\right\}.
\end{align*}
$\mathcal{W}(M)$ is compact, and the Toeplitz--Hausdorff theorem establishes its convexity~\cite{psarrakos2002numerical}.

\section{SRG Closure of Discrete-Time LTI Systems}\label{SRG Closure}
We consider a square LTI system with $m$ inputs and $m$ outputs, described by
\begin{align}
    x_{k+1} &= Ax_k + Bu_k, \qquad x_0 = 0, \label{discrete-time-state} \\
    y_k &= Cx_k + Du_k,    \label{discrete-time-output}
\end{align}
where $A\in\mathbb R^{n\times n}$, $B\in\mathbb R^{n\times m}$, $C\in\mathbb R^{m\times n}$, and
$D\in\mathbb R^{m\times m}$. We consider the complex extension of this real-coefficient realization, allowing complex-valued inputs, and assume $\rho(A)<1$. The induced causal convolution operator $T_d:\ell_2^+(\mathbb{N}_0)\to\ell_2^+(\mathbb{N}_0)$ is then bounded.
The transfer matrix is
\begin{align*}
    G(z) = C(zI-A)^{-1}B + D,
\end{align*}
and its frequency response is $G_d(\theta):=G(e^{j\theta})$, where $\theta$ is the normalized angular frequency $\omega T_s$ modulo $2\pi$, $\omega$ is the angular frequency, and $T_s$ is the sampling period. Define $\Phi_G(\theta):=f_{bk}(G_d(\theta))$.

\begin{theorem}\label{thm:dt_srg_closure}
    Consider the discrete-time LTI system in~\eqref{discrete-time-state}--\eqref{discrete-time-output}, and assume that $A$ is Schur stable. Let $T_d:\ell^+_2(\mathbb{N}_0) \to \ell^+_2(\mathbb{N}_0)$ be the causal convolution operator induced by $G$. Then, the closure of its SRG satisfies
    \begin{align}
        f_{bk}(\mathrm{cl}\,\operatorname{SRG}(T_d)) = \mathrm{co} \bigcup_{\theta \in [0,2\pi)} \mathcal{W}(\Phi_G(\theta )). \label{thm1-bk}
    \end{align}
    Equivalently,
    \begin{align}
        \mathrm{cl}\,\operatorname{SRG}(T_d) &= g_{bk} \left(\mathrm{co} \bigcup_{\theta \in [0,2\pi)} \mathcal{W}(\Phi_G(\theta))\right). \label{thm1-srg}
    \end{align}
\end{theorem}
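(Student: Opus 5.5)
The plan is to reduce both sides of \eqref{thm1-bk} to statements about quadratic forms on $\mathcal{H}_2$, using Parseval's identity and the Beltrami--Klein/numerical-range link of~\cite{pates2021scaled}.

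\textbf{Algebraic identity.} For $v\neq 0$ with upper-half-plane SRG point $z_v$ we have $|z_v|^2=\|T_dv\|^2/\|v\|^2$ and $\Re(z_v)=\Re\langle v,T_dv\rangle/\|v\|^2$. Substituting into the scalar map $f_{bk}$ gives
\begin{align*}
f_{bk}(z_v)=\frac{\langle v,(T_d^*-jI)(T_d-jI)v\rangle}{\langle v,(I+T_d^*T_d)v\rangle}.
\end{align*}
Causality and stability mean that $T_d$ acts on boundary values as multiplication by $G_d(\theta)$, and that $T_dv$ stays in $\mathcal{H}_2$. Parseval's identity then turns numerator and denominator into $\frac{1}{2\pi}\int V^*N(\theta)V\,d\theta$ and $\frac{1}{2\pi}\int V^*M(\theta)V\,d\theta$. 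Here $N(\theta)=(G_d^*-jI)(G_d-jI)$ and $M(\theta)=I+G_d^*G_d\succ0$. Note that no Toeplitz compression issue arises, because only $\|T_dv\|$ and $\langle v,T_dv\rangle$ enter, and both are $L_2$ integrals.

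\textbf{Inclusion ``$\subseteq$''.} Put $W(\theta):=M(\theta)^{1/2}V(\theta)$. Then
\begin{align*}
f_{bk}(z_v)=\frac{\int W^*\Phi_G(\theta)W\,d\theta}{\int W^*W\,d\theta}.
\end{align*}
This is an average, with probability weights $\|W(\theta)\|^2/\int\|W\|^2$, of points $W^*\Phi_GW/\|W\|^2\in\mathcal{W}(\Phi_G(\theta))$. It therefore lies in the closed convex hull of $\bigcup_\theta\mathcal{W}(\Phi_G(\theta))$. The map $\theta\mapsto\Phi_G(\theta)$ is continuous on the circle, because $\rho(A)<1$. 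Hence the union is compact, and its convex hull in $\mathbb{C}$ is compact, so no extra closure is needed. Since $f_{bk}$ is continuous and the right-hand side is closed, the inclusion passes to $\mathrm{cl}\,\operatorname{SRG}(T_d)$.

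\textbf{Inclusion ``$\supseteq$'' and the main obstacle.} By Carathéodory's theorem in the plane, each point of the right-hand side has the form $\sum_{i=1}^3\lambda_i\,x_i^*\Phi_G(\theta_i)x_i$ with $\|x_i\|=1$. Since each $\mathcal{W}(\Phi_G(\theta))$ is convex, repeated frequencies can be merged, so we may take the $\theta_i$ distinct. I would use finitely supported complex inputs
\begin{align*}
u^{(L)}_k=\sum_i c_i\,M(\theta_i)^{-1/2}x_i\,\frac{e^{jk\theta_i}}{\sqrt L},\qquad 0\le k<L,
\end{align*}
with $c_i$ chosen so that $|c_i|^2\propto\lambda_i$. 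Complex inputs are allowed by the problem setup.

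The main obstacle is showing that
\begin{align*}
\frac{1}{2\pi}\int V_L^*N V_L\,d\theta\to\sum_i|c_i|^2\,x_i^*M(\theta_i)^{-1/2}N(\theta_i)M(\theta_i)^{-1/2}x_i,
\end{align*}
and the analogous limit for $M$. The diagonal terms involve the Fejér kernel $|D_L(\theta-\theta_i)|^2/L$, which is an approximate identity. Since $N$ and $M$ are continuous, these terms converge to the point evaluations at $\theta_i$. The cross terms involve $D_L(\theta-\theta_i)\overline{D_L(\theta-\theta_l)}/L$ with $\theta_i\neq\theta_l$. I would bound them by splitting the circle into small neighbourhoods of $\theta_i$ and $\theta_l$ and the complement, and using $|D_L(\phi)|\le 1/|\sin(\phi/2)|$ away from zero; this shows they vanish as $L\to\infty$. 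Combining the limits, $f_{bk}(z_{u^{(L)}})$ converges to the target point, so the target lies in $f_{bk}(\mathrm{cl}\,\operatorname{SRG}(T_d))$ by continuity of $f_{bk}$.

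\textbf{Equivalent form \eqref{thm1-srg}.} This follows because $\operatorname{SRG}(T_d)$, and hence its closure, is conjugate-symmetric, and $g_{bk}$ inverts $f_{bk}$ on conjugate-symmetric sets. One checks that the point $w=1$ is handled by the convention $|z|\to\infty$; it cannot occur here, since $\|T_d\|<\infty$ bounds $|f_{bk}|$ away from that degenerate case.
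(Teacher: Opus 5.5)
Your proposal is correct and takes essentially the same route as the paper. For the forward inclusion you use the Parseval quotient with the substitution $M(\theta)^{1/2}V$. For the reverse inclusion you use Carath\'eodory with merged frequencies, truncated sinusoids with amplitudes $M(\theta_i)^{-1/2}x_i$, and Fej\'er-kernel localization; your explicit $1/|\sin(\phi/2)|$ bound on the cross terms is a slightly more careful version of the paper's Cauchy--Schwarz remark.

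One step needs tightening: ``by continuity of $f_{bk}$'' at the end of the reverse inclusion. Continuity alone only gives $f_{bk}(\mathrm{cl}\,S)\subseteq\mathrm{cl}\,f_{bk}(S)$, which is the wrong direction here. You also need $|z_{u^{(L)}}|\le\|T_d\|$, so that a subsequence converges to a point of $\mathrm{cl}\,\operatorname{SRG}(T_d)$. This is the same compactness argument the paper uses to identify $\mathrm{cl}\,f_{bk}(\operatorname{SRG}(T_d))$ with $f_{bk}(\mathrm{cl}\,\operatorname{SRG}(T_d))$.
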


\begin{proof}
    Let
    \begin{align*}
        P(\theta) &= I + G_d(\theta)^* G_d(\theta), \\
        Q(\theta) &= G_d(\theta)^* G_d(\theta) - I - j\big(G_d(\theta)^* + G_d(\theta)\big).
    \end{align*}
    Since $P(\theta)\succ 0$ for every $\theta\in[0,2\pi)$, we have
    \begin{align}
        \Phi_G(\theta)=P(\theta)^{-1/2}Q(\theta)P(\theta)^{-1/2}.\label{phi-G-representation}
    \end{align}
    Because $G_d(\theta)$ is continuous and $2\pi$-periodic, both $P(\theta)$ and $\Phi_G(\theta)$ are continuous periodic matrix functions. Hence, the set
    \begin{align*}
        \bigcup_{\theta\in[0,2\pi)}\mathcal{W}(\Phi_G(\theta))
    \end{align*}
    is compact. Indeed, it is the image of the compact set $[0,2\pi]\times\{v\in\mathbb{C}^m:\|v\|=1\}$ under the continuous map
    \begin{align*}
        (\theta,v)\mapsto v^*\Phi_G(\theta)v.
    \end{align*}
    Therefore, its convex hull
    \begin{align}
        K_G:=\mathrm{co}\bigcup_{\theta\in[0,2\pi)}\mathcal{W}(\Phi_G(\theta))\label{KG-definition}
    \end{align}
    is also compact, since it is the convex hull of a compact subset of $\mathbb{C}\simeq\mathbb{R}^2$.

    We first prove
    \begin{align}
        f_{bk}(\operatorname{SRG}(T_d)) \subseteq K_G. \label{forward-inclusion}
    \end{align}
    Let $u\in\ell_2^+(\mathbb{N}_0)$, $u\neq0$, and let $y=T_du$. Denote by $U(e^{j\theta})$ and $Y(e^{j\theta})$ the boundary values of the corresponding Hardy-space transforms. Since $T_d$ is the causal LTI operator induced by $G$, we have
    \begin{align}
        Y(e^{j\theta}) = G(e^{j\theta})U(e^{j\theta}) \label{frequency-relation}
    \end{align}
    for almost every $\theta\in[0,2\pi)$.
    By Parseval's identity,
    \begin{align}
        \|u\|^2 &= \frac{1}{2\pi} \int_{0}^{2\pi} \|U(e^{j\theta})\|^2 \, d\theta, \label{parseval-input} \\
        \|T_d u\|^2 &= \frac{1}{2\pi} \int_{0}^{2\pi} \|G(e^{j\theta})U(e^{j\theta})\|^2 \, d\theta, \label{parseval-output} \\
        \langle u, T_d u \rangle &= \frac{1}{2\pi} \int_{0}^{2\pi} U(e^{j\theta})^* G(e^{j\theta}) U(e^{j\theta}) \, d\theta. \label{parseval-inner}
    \end{align}

    Let $z_u$ be the upper-half-plane SRG point generated by $u$, and define
    \begin{align*}
        r^2 = \frac{\|T_d u\|^2}{\|u\|^2}, \qquad c = \frac{\Re\langle u, T_d u \rangle}{\|u\|^2}.
    \end{align*}
    Then
    \begin{align}
        z_u = c+j\sqrt{r^2-c^2}.
    \end{align}
    Applying the scalar Beltrami--Klein map gives
    \begin{align}
        f_{bk}(z_u)=\frac{r^2-1-2jc}{r^2+1}. \label{bk-srg-point}
    \end{align}
    Substituting~\eqref{parseval-input}--\eqref{parseval-inner} into~\eqref{bk-srg-point} and suppressing the frequency argument for readability yields
    \begin{align}
        f_{bk}(z_u) = \frac{\int_{0}^{2\pi} U^*(G^*G - I - j(G^*+ G))U\, d\theta}
        {\int_{0}^{2\pi} U^*(I + G^*G)U \, d\theta}.\label{f-bk-zu}
    \end{align}
    Hence, using the definitions of $P(\theta)$ and $Q(\theta)$,
    \begin{align}
        f_{bk}(z_u) = \frac{\int_0^{2\pi} U(e^{j\theta})^*Q(\theta) U(e^{j\theta})\,d\theta}{\int_0^{2\pi} U(e^{j\theta})^*P(\theta) U(e^{j\theta})\,d\theta}.   \label{pq-quotient}
    \end{align}

    Let $\xi(\theta) = P(\theta)^{1/2}U(e^{j\theta})$. Using~\eqref{phi-G-representation}, we can rewrite~\eqref{pq-quotient} as
    \begin{align}
        f_{bk}(z_u) = \frac{\int_{0}^{2\pi} \xi(\theta)^*\Phi_G(\theta)\xi(\theta) \, d\theta}{\int_{0}^{2\pi} \|\xi(\theta)\|^2 \, d\theta}. \label{weighted-numerical-range}
    \end{align}
    For each fixed $\theta$ with $\xi(\theta)\neq 0$,
    \begin{align}
        \frac{\xi(\theta)^*\Phi_G(\theta)\xi(\theta)}{\|\xi(\theta)\|^2} \in \mathcal{W}(\Phi_G(\theta)).
    \end{align}
    Therefore,~\eqref{weighted-numerical-range} is a weighted average of points belonging to
    \begin{align}
        \bigcup_{\theta \in [0,2\pi)} \mathcal{W}(\Phi_G(\theta)),
    \end{align}
    with nonnegative weights proportional to
    \begin{align*}
        \|\xi(\theta)\|^2 = U(e^{j\theta})^* P(\theta) U(e^{j\theta}).
    \end{align*}
    Since $K_G$ is closed and convex, it contains every such weighted average. Hence,
    \begin{align}
        f_{bk}(z_u)\in K_G,
    \end{align}
    and since $f_{bk}(\bar z_u)=f_{bk}(z_u)$, this proves~\eqref{forward-inclusion}.

    We now prove the reverse inclusion. Let $\eta \in K_G$. Since $K_G$ is the convex hull in~\eqref{KG-definition}, there exist finitely many frequencies $\theta_1,\ldots,\theta_K\in[0,2\pi)$, unit vectors $v_1,\ldots,v_K\in\mathbb{C}^m$, and weights $\alpha_k\geq0$, with
    \begin{align}
        \sum_{k=1}^K\alpha_k=1, \label{alpha-sum}
    \end{align}
    such that
    \begin{align}
        \eta = \sum_{k=1}^K \alpha_k v_k^*\Phi_G(\theta_k)v_k. \label{eta-convex-combination}
    \end{align}
    Terms with zero weight may be omitted. Moreover, if several terms correspond to the same frequency, they can be combined by the convexity of the numerical range $\mathcal{W}(\Phi_G(\theta))$. We may therefore assume that $\theta_1,\ldots,\theta_K$ are distinct modulo $2\pi$.
    Set
    \begin{align}
        a_k = P(\theta_k)^{-1/2}v_k, \qquad k=1,\ldots, K. \label{ak-definition}
    \end{align}
    For $N\geq1$, define the one-sided, finitely supported input
    \begin{align}
        u_N(t) =
        \begin{cases}
            \displaystyle
            \frac{1}{\sqrt{N}} \sum_{k=1}^K \sqrt{\alpha_k}\, a_k e^{jt\theta_k}, &0\leq t<N,\\
                0,    &t\geq N.
        \end{cases}    \label{uN-definition}
    \end{align}
    By construction, $u_N\in\ell_2^+(\mathbb{N}_0)$. Its Hardy-space boundary value is
    \begin{align}
        U_N(e^{j\theta}) = \sum_{k=1}^K \sqrt{\alpha_k}\, a_k d_N(\theta-\theta_k), \label{UN-definition}
    \end{align}
    where
    \begin{align}
        d_N(\varphi) = \frac{1}{\sqrt{N}} \sum_{t=0}^{N-1} e^{-jt\varphi}.\label{dN-definition}
    \end{align}

    We establish the following localization property, suppressing the dependence of $U_N$ and $M$ on $\theta$ for readability. For every continuous $2\pi$-periodic matrix function $M(\theta)$,
    \begin{align}
        \frac{1}{2\pi}  \int_0^{2\pi} U_N^* M U_N\,d\theta
        \longrightarrow \sum_{k=1}^K \alpha_k a_k^*M(\theta_k)a_k \label{localization-limit}
    \end{align}
    as $N\to\infty$. Substituting~\eqref{UN-definition} produces diagonal terms and cross terms. For the diagonal terms,
    \begin{align*}
        |d_N(\theta-\theta_k)|^2
    \end{align*}
    is the Fej\'er kernel, normalized so that
    \begin{align*}
        \frac{1}{2\pi} \int_0^{2\pi} |d_N(\theta-\theta_k)|^2\,d\theta = 1.
    \end{align*}
    The Fej\'er kernels concentrate their mass at $\theta=\theta_k$, and continuity of $M$ therefore gives
    \begin{align*}
        \frac{1}{2\pi} \int_0^{2\pi} |d_N(\theta-\theta_k)|^2 a_k^*M(\theta)a_k\,d\theta
        \longrightarrow a_k^*M(\theta_k)a_k.
    \end{align*}
    For $k\neq l$, the frequencies $\theta_k$ and $\theta_l$ are separated. The $L_2$-mass of $d_N(\theta-\theta_k)$ concentrates near $\theta_k$, while that of $d_N(\theta-\theta_l)$ concentrates near $\theta_l$. Since $M$ is bounded, the Cauchy--Schwarz inequality implies that the corresponding cross terms converge to zero. 
    This proves~\eqref{localization-limit}.

    Applying~\eqref{localization-limit} first with $M=P$ gives
    \begin{align}
        \frac{1}{2\pi} \int_0^{2\pi} U_N^*P U_N\,d\theta  &\longrightarrow
        \sum_{k=1}^K \alpha_k a_k^*P(\theta_k)a_k \nonumber\\
        &= \sum_{k=1}^K \alpha_k v_k^*v_k = \sum_{k=1}^K\alpha_k = 1. \label{P-limit}
    \end{align}
    In particular, the denominator in~\eqref{pq-quotient} with $U=U_N$ is nonzero for all sufficiently large $N$, so $u_N\neq0$.

    Applying~\eqref{localization-limit} with $M=Q$, and using~\eqref{ak-definition} and~\eqref{phi-G-representation}, gives
    \begin{align}
        \frac{1}{2\pi} \int_0^{2\pi} U_N^*Q U_N\,d\theta &\longrightarrow
        \sum_{k=1}^K \alpha_k a_k^*Q(\theta_k)a_k \nonumber\\
        &= \sum_{k=1}^K \alpha_k v_k^*\Phi_G(\theta_k)v_k = \eta.  \label{Q-limit}
    \end{align}
    Let $z_{u_N}$ denote the upper-half-plane SRG point generated by $u_N$.
    By the same calculation as in~\eqref{pq-quotient},
    \begin{align}
        f_{bk}(z_{u_N}) = \frac{\int_0^{2\pi} U_N^*Q U_N\,d\theta}
        {\int_0^{2\pi} U_N^*P U_N\,d\theta}. \label{f-bk-uN}
    \end{align}
    It follows from~\eqref{P-limit} and~\eqref{Q-limit} that
    \begin{align}
        f_{bk}(z_{u_N}) \longrightarrow \eta. \label{eta-limit}
    \end{align}
    Hence,
    \begin{align}
        K_G \subseteq \mathrm{cl}\, f_{bk}(\operatorname{SRG}(T_d)). \label{reverse-inclusion}
    \end{align}
    Combining~\eqref{forward-inclusion} and~\eqref{reverse-inclusion},
    and using the fact that $K_G$ is closed, gives
    \begin{align}
        \mathrm{cl}\, f_{bk}(\operatorname{SRG}(T_d)) = K_G.\label{closure-image-equality}
    \end{align}
    It remains to relate this equality to $f_{bk}(\mathrm{cl}\,\operatorname{SRG}(T_d))$. Since $T_d$ is bounded, its operator norm satisfies
    \begin{align*}
        \gamma:=\|T_d\|<\infty.
    \end{align*}
    Every $z\in\operatorname{SRG}(T_d)$ satisfies $|z|\leq\gamma$. Therefore, $\mathrm{cl}\,\operatorname{SRG}(T_d)$ is compact. Since $f_{bk}$ is continuous on the complex plane,
    \begin{align}
        f_{bk} \left(\mathrm{cl}\,\operatorname{SRG}(T_d) \right) &=\mathrm{cl}\, f_{bk} \left(\operatorname{SRG}(T_d) \right) \nonumber\\
        &= K_G. \label{final-bk-equality}
    \end{align}
    This proves~\eqref{thm1-bk}.

    Finally, for every $z\in\mathrm{cl}\,\operatorname{SRG}(T_d)$,
    \begin{align}
        \Re\,f_{bk}(z) = \frac{|z|^2-1}{|z|^2+1} \leq \frac{\gamma^2-1}{\gamma^2+1} <1.
        \label{avoid-one}
    \end{align}
    Hence, $K_G$ does not contain the point $1$, at which the inverse Beltrami--Klein parametrization becomes singular.
    Applying $g_{bk}$ to~\eqref{final-bk-equality}, and using the conjugate symmetry of $\operatorname{SRG}(T_d)$, yields~\eqref{thm1-srg}.
\end{proof}

\begin{remark}\label{rem:neg-freq}
    For real-coefficient systems, $G_d(-\theta)=\overline{G_d(\theta)}$, where frequencies are understood modulo $2\pi$. The matrix Beltrami--Klein transform consequently satisfies $\Phi_G(-\theta)=\Phi_G(\theta)^\top$.
    Since $\mathcal{W}(M^\top)=\mathcal{W}(M)$, negative frequencies contribute no additional points to $K_G$.
    Thus, $\theta\in[0,\pi]$ suffices to construct $K_G$, and $g_{bk}(K_G)$ supplies both conjugate halves of the operator SRG closure.
\end{remark}

\begin{remark}\label{rem:freqwise}
    For every $\theta$, the numerical range $\mathcal{W}(\Phi_G(\theta))$ is exactly the Beltrami--Klein image of the SRG of the constant matrix $G_d(\theta)$, i.e., $f_{bk}(\operatorname{SRG}(G_d(\theta)))=\mathcal{W}(\Phi_G(\theta))$~\cite{pates2021scaled}. 
    This follows from the finite-dimensional version of the computation leading to~\eqref{weighted-numerical-range}. 
    Hence, Theorem~\ref{thm:dt_srg_closure} states that the operator SRG closure is obtained by convexifying, in the Beltrami--Klein disk, the frequency-wise SRGs used in~\cite{baron2025stability}. In particular, $\operatorname{SRG}(G_d(\theta))\subseteq\mathrm{cl}\,\operatorname{SRG}(T_d)$ for every $\theta\in[0,2\pi)$.
\end{remark}

\begin{definition}
    For a SISO transfer function $G$ inducing $T_d$, define its discrete-time Nyquist locus by
    \begin{align*}
        \operatorname{Nyquist}_d(T_d):=\{G(e^{j\theta}):\theta\in[0,2\pi)\}.
    \end{align*}
\end{definition}

\begin{corollary}\label{cor:siso}
    Consider the real-coefficient SISO case ($m=1$) of Theorem~\ref{thm:dt_srg_closure}. Then
    \begin{align}
        \mathcal{W}(\Phi_G(\theta)) = \{f_{bk}(G_d(\theta))\}, \qquad \theta\in[0,2\pi).
    \end{align}
    Therefore,
    \begin{align}
        f_{bk}(\mathrm{cl}\,\operatorname{SRG}(T_d)) = \mathrm{co}\{f_{bk}(G_d(\theta)) : \theta \in [0, 2\pi)\}.
    \end{align}
    Writing $N_d:=\operatorname{Nyquist}_d(T_d)$, we obtain
    \begin{align}
        \mathrm{cl}\,\operatorname{SRG}(T_d) \cap \overline{\mathbb{C}}_{+} = \mathrm{hconv}(N_d \cap \overline{\mathbb{C}}_{+}).
    \end{align}
\end{corollary}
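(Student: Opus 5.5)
The corollary follows by specializing Theorem~\ref{thm:dt_srg_closure} to $m=1$ and then undoing the Beltrami--Klein map on the upper half-plane. The first step is the scalar identity. For $m=1$, $G_d(\theta)$ is a complex number $g$, and the matrix transform reduces to
\[
\Phi_G(\theta)=(1+|g|^2)^{-1}(\bar g-j)(g-j)=f_{bk}(g).
\]
The numerical range of a $1\times1$ matrix $[\phi]$ is $\{\bar x\phi x:|x|=1\}=\{\phi\}$, which gives $\mathcal{W}(\Phi_G(\theta))=\{f_{bk}(G_d(\theta))\}$. Substituting this into~\eqref{thm1-bk} yields the second display directly.

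For the third display, I would write $S:=\{f_{bk}(G_d(\theta)):\theta\in[0,2\pi)\}=f_{bk}(N_d)$. Since $f_{bk}(z)=f_{bk}(\bar z)$ and real coefficients give $G_d(-\theta)=\overline{G_d(\theta)}$ (Remark~\ref{rem:neg-freq}), the Nyquist locus is conjugate symmetric. Every point of $N_d$ therefore has a conjugate representative in $N_d\cap\overline{\mathbb{C}}_+$ with the same image, so $f_{bk}(N_d)=f_{bk}(N_d\cap\overline{\mathbb{C}}_+)$. Hence
\[
f_{bk}(\mathrm{cl}\,\operatorname{SRG}(T_d))=\mathrm{co}\,f_{bk}(N_d\cap\overline{\mathbb{C}}_+).
\]

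The remaining step is to intersect with $\overline{\mathbb{C}}_+$ and invert. I need that $f_{bk}$ restricted to $\overline{\mathbb{C}}_+$ is a bijection onto the closed disk minus $\{1\}$, with inverse $g_+$; I would check this directly from the explicit formula for $g_{bk}$. The set $\mathrm{cl}\,\operatorname{SRG}(T_d)$ is conjugate symmetric, so its upper-half-plane part $X$ satisfies $f_{bk}(X)=f_{bk}(\mathrm{cl}\,\operatorname{SRG}(T_d))$. Applying $g_+$, which is legitimate because~\eqref{avoid-one} shows the point $1$ is excluded, gives $X=g_+(\mathrm{co}\,f_{bk}(N_d\cap\overline{\mathbb{C}}_+))=\mathrm{hconv}(N_d\cap\overline{\mathbb{C}}_+)$. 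The definition of hconv requires a bounded set, and $N_d$ is bounded because $G$ is continuous on the unit circle.

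The main point needing care is the equality $g_+(f_{bk}(X))=X$, which holds because $X$ lies in $\overline{\mathbb{C}}_+$ and $f_{bk}$ is injective there. Injectivity should be confirmed by solving for $z$: the real part of $f_{bk}(z)$ determines $|z|$, the imaginary part determines $\Re z$, and the sign condition $\Im z\ge 0$ then fixes $z$.
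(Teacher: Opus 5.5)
Your proposal is correct and follows essentially the same route as the paper's proof. You reduce the $1\times 1$ numerical range to a singleton and substitute into~\eqref{thm1-bk}, use conjugate symmetry of $N_d$ to replace $f_{bk}(N_d)$ by $f_{bk}(N_d\cap\overline{\mathbb{C}}_+)$, and then apply $g_+$ using conjugate symmetry of $\mathrm{cl}\,\operatorname{SRG}(T_d)$; beyond the paper, you also explicitly check that $f_{bk}$ is injective on $\overline{\mathbb{C}}_+$ and that the point $1$ is excluded, which the paper takes for granted.
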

\begin{proof}
    The numerical range of a scalar is the singleton containing that scalar, and for $m=1$ the matrix transform coincides with the scalar map. 
    The first two identities therefore follow from Theorem~\ref{thm:dt_srg_closure}. 
    Since $G$ has real coefficients, $N_d$ is conjugate symmetric and $f_{bk}(N_d)=f_{bk}(N_d\cap\overline{\mathbb{C}}_+)$. Since $\mathrm{cl}\,\operatorname{SRG}(T_d)$ is conjugate symmetric and $g_+$ inverts $f_{bk}$ on $\overline{\mathbb{C}}_+$, applying $g_+$ and the definition of $\mathrm{hconv}$ proves the final identity.
\end{proof}

\section{Numerical Examples}\label{Examples}
\subsection{SISO Filters}\label{sec:filters}
Consider two SISO models from~\cite{nauta2026Discrete-time}: a low-pass filter and a high-pass filter. The low-pass filter has the state-space representation
\begin{align}
    x_{k+1} &= \left[\begin{array}{cc} 0.94 & -0.33\\ 1 & 0 \end{array}\right] x_k + \left[\begin{array}{c} 1 \\ 0 \end{array}\right] u_k, \label{lowpass_state} \\
    y_k &= \left[\begin{array}{cc} 0.29 & 0.07 \end{array}\right] x_k + 0.10u_k,    \label{lowpass_out}
\end{align}
and the high-pass filter has the state-space representation
\begin{align}
    x_{k+1} &= \left[\begin{array}{cc} 0.94 & -0.33\\ 1 & 0 \end{array}\right] x_k + \left[\begin{array}{c} 1 \\ 0 \end{array}\right] u_k, \label{highpass_state} \\
    y_k &= \left[\begin{array}{cc} -0.60 & 0.38 \end{array}\right] x_k + 0.57u_k.    \label{highpass_out}
\end{align}
Both filters are Schur stable, and Corollary~\ref{cor:siso} characterizes their SRG closures. Figs.~\ref{fig:highpass_2d} and~\ref{fig:lowpass_2d} show these closures, and Figs.~\ref{fig:highpass_3d} and~\ref{fig:lowpass_3d} show the frequency-wise SRGs over frequency. 
Fig.~\ref{fig:lmi_hplp} shows the LMI-based SRGs of the two filters, obtained with the method of~\cite{nauta2026Discrete-time}.

Although the two filters have opposite frequency responses, their SRG closures, as well as their LMI-based SRGs, are nearly identical.
By Corollary~\ref{cor:siso}, the closure depends only on the Nyquist locus as a set and does not record the frequency at which each point is attained.
The frequency-wise SRGs in Figs.~\ref{fig:highpass_3d} and~\ref{fig:lowpass_3d} retain this information and distinguish the two filters: as $\theta$ increases from $0$ to $\pi$, the SRG of the low-pass filter moves from the point near $1$ toward the origin, whereas that of the high-pass filter moves in the opposite direction.

\begin{figure}[t]
    \centering
    \begin{subfigure}[b]{0.48\columnwidth}
        \centering
        \includegraphics[width=\linewidth]{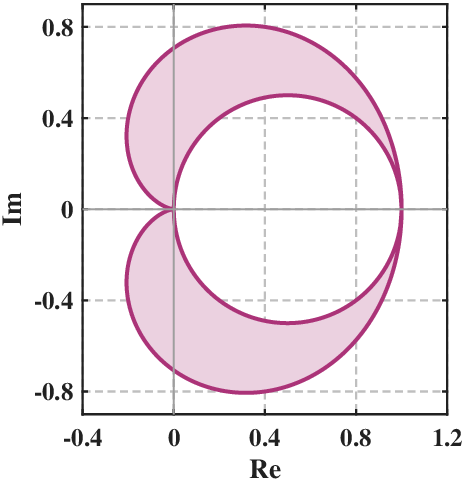}
        \caption{High-pass filter (2D).}
        \label{fig:highpass_2d}
    \end{subfigure}
    \hfill
    \begin{subfigure}[b]{0.48\columnwidth}
        \centering
        \includegraphics[width=\linewidth]{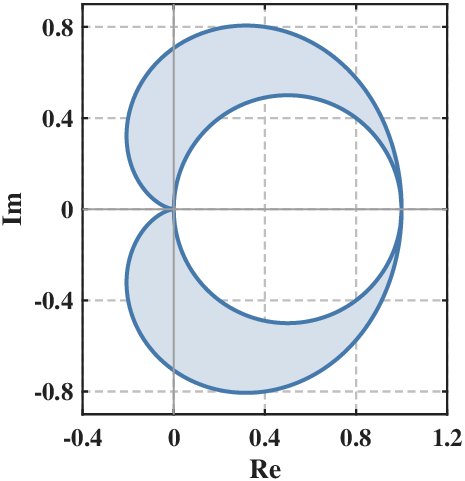}
        \caption{Low-pass filter (2D).}
        \label{fig:lowpass_2d}
    \end{subfigure}
    \par\medskip
    \begin{subfigure}[b]{0.48\columnwidth}
        \centering
        \includegraphics[width=\linewidth]{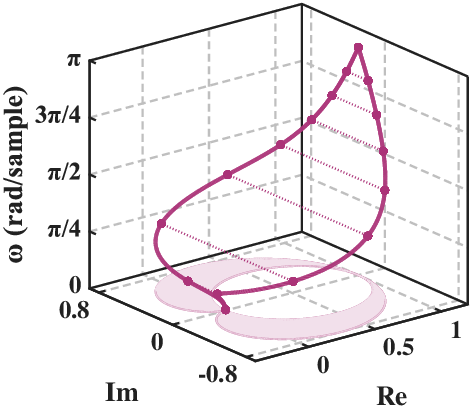}
        \caption{High-pass filter (3D).}
        \label{fig:highpass_3d}
    \end{subfigure}
    \hfill
    \begin{subfigure}[b]{0.48\columnwidth}
        \centering
        \includegraphics[width=\linewidth]{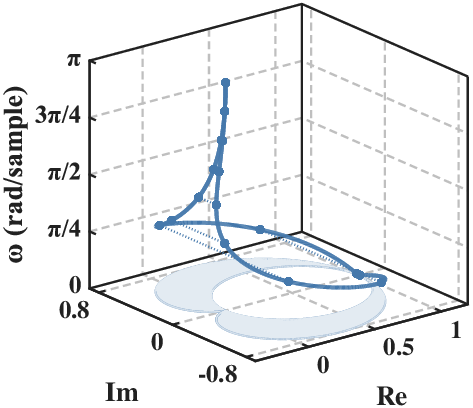}
        \caption{Low-pass filter (3D).}
        \label{fig:lowpass_3d}
    \end{subfigure}
    \caption{Example A. SRG closures (2D) and frequency-wise SRGs (3D) of the high-pass and low-pass filters.}
    \label{fig:filters}
\end{figure}

\begin{figure}[t]
    \centering
    \includegraphics[width=0.95\columnwidth]{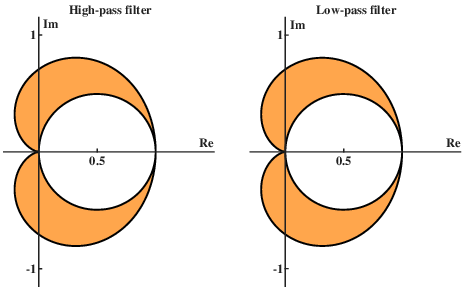}
    \caption{Example A. LMI-based SRGs of the high-pass and low-pass filters.}
    \label{fig:lmi_hplp}
\end{figure}

\subsection{\texorpdfstring{$2\times 2$}{2x2} MIMO Plant}\label{sec:mimo}
Consider the two-input two-output plant from~\cite{cordoba2025mimo},
\begin{align}
    P(z) = \frac{1}{z-0.9048}\left[\begin{array}{cc} 0.09516 & 0.03807\\ -0.02974 & 0.04758 \end{array}\right]. \label{mimo_plant}
\end{align}
A state-space realization~\eqref{discrete-time-state}--\eqref{discrete-time-output} is given by $A=0.9048\,I_2$, $B=I_2$, $D=0$, and $C$ equal to the numerator matrix in~\eqref{mimo_plant}. 
The plant is Schur stable. Fig.~\ref{fig:mimo_freq_2d} shows the union of the frequency-wise SRGs $\operatorname{SRG}(P(e^{j\theta}))$, and Fig.~\ref{fig:mimo_freq_3d} shows their distribution over frequency. 

Unlike in the SISO case, each frequency-wise SRG is now a region.
For this plant, the Beltrami--Klein image of their union is already convex, so by Theorem~\ref{thm:dt_srg_closure} the union equals $\mathrm{cl}\,\operatorname{SRG}(T_d)$; accordingly, it coincides with the LMI-based SRG in Fig.~\ref{fig:mimo_lmi_2d}, obtained with the method of~\cite{nauta2026Discrete-time}.

\begin{figure}[t]
    \centering
    \begin{subfigure}[b]{0.48\columnwidth}
        \centering
        \includegraphics[width=\linewidth]{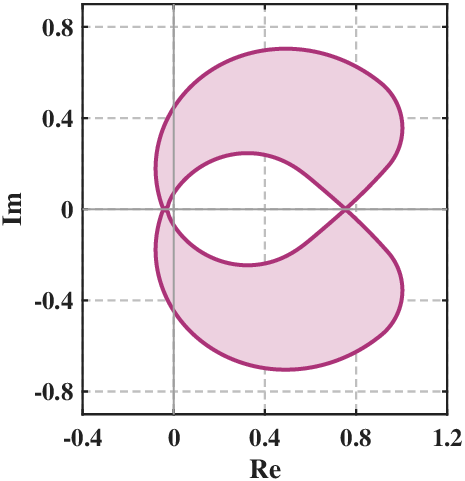}
        \caption{Frequency-wise SRGs (2D).}
        \label{fig:mimo_freq_2d}
    \end{subfigure}
    \hfill
    \begin{subfigure}[b]{0.48\columnwidth}
        \centering
        \includegraphics[width=\linewidth]{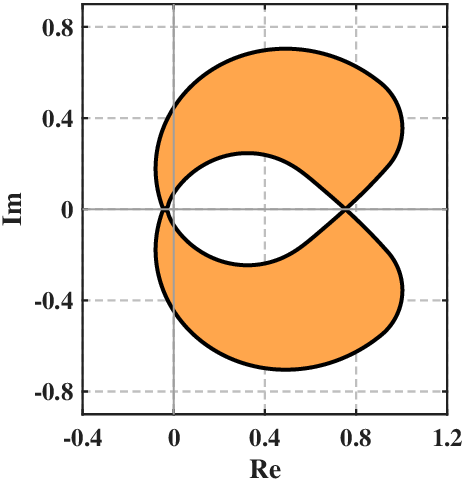}
        \caption{LMI-based SRG.}
        \label{fig:mimo_lmi_2d}
    \end{subfigure}
    \par\medskip
    \begin{subfigure}[b]{0.6\columnwidth}
        \centering
        \includegraphics[width=\linewidth]{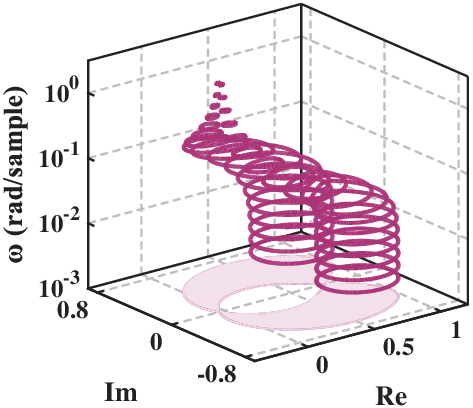}
        \caption{Frequency-wise SRGs (3D).}
        \label{fig:mimo_freq_3d}
    \end{subfigure}
    \caption{Example B. SRGs of the MIMO plant~\eqref{mimo_plant}.}
    \label{fig:mimo}
\end{figure}

 \subsection{\texorpdfstring{$4\times 4$}{4x4} MIMO Plant}
Consider next the four-input four-output plant from~\cite{baron2025mixed},
\begin{equation}
    \setlength{\arraycolsep}{1.5pt}\setlength{\nulldelimiterspace}{0pt}
    H(s) = \begin{bmatrix}
        \frac{1}{(s+1)^3(s+2)} & \frac{2}{s+3} & \frac{4}{(s+1)(s+4)} & \frac{1}{(s+1)^2(s+2)^2}\\[3pt]
        \frac{2}{s+5} & \frac{3}{(s+3)(s+4)} & \frac{3}{(s+1)^2(s+2)} & \frac{3}{s+4}\\[3pt]
        \frac{1}{(s+1)^3} & \frac{3}{s+5} & \frac{1}{(s+1)(s+3)} & \frac{2}{(s+3)(s+4)}\\[3pt]
        \frac{1}{s+5} & \frac{2}{(s+1)^5(s+2)} & \frac{1}{(s+1)(s+2)} & \frac{1}{s+1}
    \end{bmatrix}\!, \label{mimo4_plant}
\end{equation}
discretized by a zero-order hold (ZOH) with sampling period $T_s=0.1$~s. 
A balanced minimal realization~\eqref{discrete-time-state}--\eqref{discrete-time-output} of the discretized plant has $23$ states and $D=0$. 
Since $\rho(A)=e^{-0.1}\approx0.9048<1$, Theorem~\ref{thm:dt_srg_closure} applies. 
Fig.~\ref{fig:mimo4} shows the resulting SRGs. 

The union of the frequency-wise SRGs and the LMI-based SRG share the same outer boundary, but near the point $0.6$ on the real axis the union leaves a wedge-shaped region uncovered, part of which the LMI-based SRG covers.
Theorem~\ref{thm:dt_srg_closure} explains this difference: a frequency-wise SRG accounts only for inputs concentrated at a single frequency, whereas $\mathrm{cl}\,\operatorname{SRG}(T_d)$ also contains the points generated by inputs~\eqref{uN-definition} that combine several frequencies, i.e., convex combinations of numerical ranges at different frequencies in the Beltrami--Klein disk.

\begin{figure}[t]
    \centering
    \begin{subfigure}[b]{0.48\columnwidth}
        \centering
        \includegraphics[width=\linewidth]{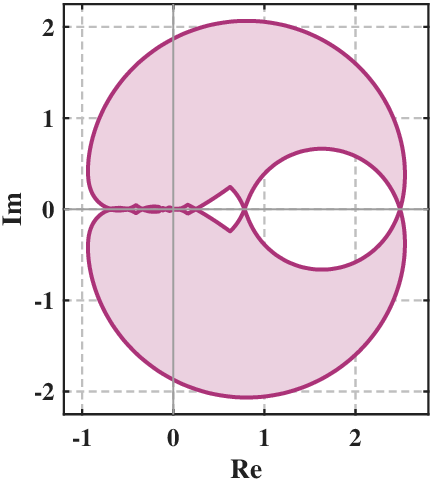}
        \caption{Frequency-wise SRGs (2D).}
        \label{fig:mimo4_freq_2d}
    \end{subfigure}
    \hfill
    \begin{subfigure}[b]{0.48\columnwidth}
        \centering
        \includegraphics[width=\linewidth]{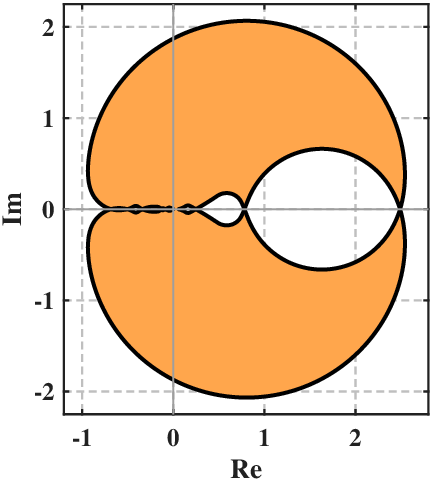}
        \caption{LMI-based SRG.}
        \label{fig:mimo4_lmi_2d}
    \end{subfigure}
    \par\medskip
    \begin{subfigure}[b]{0.6\columnwidth}
        \centering
        \includegraphics[width=\linewidth]{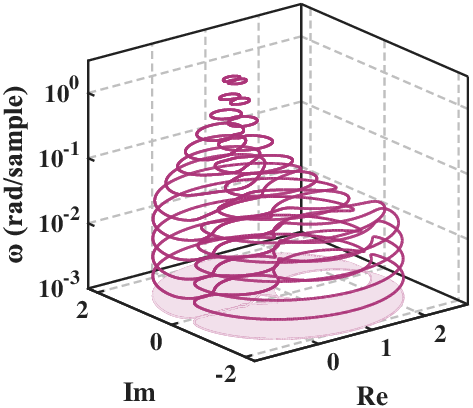}
        \caption{Frequency-wise SRGs (3D).}
        \label{fig:mimo4_freq_3d}
    \end{subfigure}
    \caption{Example C. SRGs of the MIMO plant~\eqref{mimo4_plant} after ZOH discretization with $T_s=0.1$~s.}
    \label{fig:mimo4}
\end{figure}

Table~\ref{tab:time} compares the computation times of the two methods.
The frequency-domain method uses $1000$ frequency points, and the LMI-based method~\cite{nauta2026Discrete-time} uses $1000$ points on the real axis, with MOSEK as the solver.

\begin{table}[t]
    \centering
    \caption{Computation time (s)}
    \label{tab:time}
    \begin{tabular}{lcc}
        \toprule
        Example & Frequency-domain & LMI-based \\
        \midrule
        A (high-pass) & 0.611 & 5.383 \\
        A (low-pass)  & 0.293 & 4.996 \\
        B  & 1.359 & 3.386 \\
        C  & 1.747 & 401.185 \\
        \bottomrule
    \end{tabular}
\end{table}

\section{Conclusions}\label{Conclusions}
This paper characterizes the SRG closure of stable, square discrete-time LTI systems as the Beltrami--Klein convexification of the frequency-wise SRGs, which reduces to the hyperbolic convex hull of the Nyquist locus in the SISO case.
When a model is available, this frequency-domain computation is preferable to the LMI-based approach, since it requires only frequency-response evaluations, numerical ranges, and planar convex hulls.
Future work will address identifying SRGs from frequency-response data.

\bibliographystyle{IEEEtran}
\bibliography{main_ICSC}

\end{document}